\newtheorem{theorem}{Theorem}
\newtheorem{lemma}{Lemma}
\newtheorem{definition}{Definition}
\newtheorem{corollary}{Corollary}
\def\beq{\begin{equation}}
\def\eeq{\end{equation}}
\def\bea{\begin{eqnarray}}
\def\eea{\end{eqnarray}}
\let\expandafter
\def\subeqnarray{\arraycolsep1pt
   \def\@eqnnum\stepcounter##1{\stepcounter{subequation}
       {\reset@font\rm(\theequation\alph{subequation})}}
\jot5mm     \eqnarray}
\newcommand{\bbR}{{\mathbb R}}
\newcommand{\bbC}{{\mathbb C}}
\newcommand{\bbP}{{\mathbb P}}
\newcommand{\cE}{{\mathcal E}}
\def\ep{\varepsilon}
\def\epsilon{\varepsilon}
\def\t{\widetilde}
\def\nn{\nonumber}
\newbox\meibox
\def\placeunder#1#2#3#4{\setbox\meibox%
\vbox{\hbox{\hskip#4$\hphantom{#2}$}\hbox{$\hphantom{#1}$}}%
\vtop{\baselineskip=0pt\lineskiplimit=\baselineskip%
\lineskip=#3\hbox to \wd\meibox{\hfil\hskip#4$#2$\hfil}%
\hbox to \wd\meibox{\hfil$#1$\hfil}}}
\def\intprod{\mathbin{\hbox to 6pt{%
                 \vrule height0.4pt width5pt depth0pt
                 \kern-.4pt
                 \vrule height6pt width0.4pt depth0pt\hss}}}
\begin{document}
\title[Geometry of Kahan discretizations]
{Geometry of the Kahan discretizations of planar quadratic Hamiltonian systems}

%
\author{Matteo Petrera, Jennifer Smirin \and Yuri B. Suris }

\thanks{E-mail: {\tt  petrera@math.tu-berlin.de, jennifer.smirin@gmail.com, suris@math.tu-berlin.de}}

\maketitle

\begin{center}
{\footnotesize{
Institut f\"ur Mathematik, MA 7-1\\
Technische Universit\"at Berlin, Str. des 17. Juni 136,
10623 Berlin, Germany
}}
\end{center}

\begin{abstract}
Kahan discretization is applicable to any quadratic vector field and produces a birational map which approximates the shift along the phase flow. For a planar quadratic Hamiltonian vector field, this map is known to be integrable and to preserve a pencil of cubic curves. Generically, the nine base points of this pencil include three points at infinity (corresponding to the asymptotic directions of cubic curves) and six finite points lying on a conic. We show that the Kahan discretization map can be represented in six different ways as a composition of two Manin involutions, corresponding to an infinite base point and to a finite base point. As a consequence, the finite base points can be ordered so that the resulting hexagon has three pairs of parallel sides which pass through the three base points at infinity. Moreover, this geometric condition on the base points turns out to be characteristic: if it is satisfied, then the cubic curves of the corresponding pencil are invariant under the Kahan discretization of a planar quadratic Hamiltonian vector field.
\end{abstract}

\section{Introduction}

The Kahan discretization was introduced in the unpublished notes \cite{K} as a method applicable to any system of ordinary differential equations on $\bbR^n$ with a quadratic vector field:
\begin{equation}\label{eq: diff eq gen}
\dot{x}=f(x)=Q(x)+Bx+c,
\end{equation}
where each component of $Q:\bbR^n\to\bbR^n$ is a quadratic form, while $B\in{\rm Mat}_{n\times n}(\bbR)$ and $c\in\bbR^n$. Kahan's discretization reads as
\begin{equation}\label{eq: Kahan gen}
\frac{\widetilde{x}-x}{\epsilon}=Q(x,\widetilde{x})+\frac{1}{2}B(x+\widetilde{x})+c,
\end{equation}
where
\[
Q(x,\widetilde{x})=\frac{1}{2}\big(Q(x+\widetilde{x})-Q(x)-Q(\widetilde{x})\big)
\]
is the symmetric bilinear form corresponding to the quadratic form $Q$. Equation (\ref{eq: Kahan gen}) is {\em linear} with respect to $\widetilde x$ and therefore defines a {\em rational} map $\widetilde{x}=\Phi_f(x,\epsilon)$. Explicitly, one has
\beq \label{eq: Phi gen}
\t x =\Phi_f(x,\ep)= x + \ep \left( I - \frac{\ep}{2} f'(x) \right)^{-1} f(x),
\eeq
where $f'(x)$ denotes the Jacobi matrix of $f(x)$. 

Clearly, this map approximates the time $\epsilon$ shift along the solutions of the original differential system. Since equation (\ref{eq: Kahan gen}) remains invariant under the interchange $x\leftrightarrow\widetilde{x}$ with the simultaneous sign inversion $\epsilon\mapsto-\epsilon$, one has the {\em reversibility} property
\begin{equation}\label{eq: reversible}
\Phi^{-1}_f(x,\epsilon)=\Phi_f(x,-\epsilon).
\end{equation}
In particular, the map $f$ is {\em birational}. 
Kahan applied this discretization scheme to the famous Lotka-Volterra system and showed that in this case it possesses a very remarkable non-spiralling property. This property was explained by Sanz-Serna \cite{SS} by demonstrating that in this case the numerical method preserves an invariant Poisson structure of the original system.

The next intriguing appearance of this discretization was in the two papers by Hirota and Kimura who (being apparently unaware of the work by Kahan) applied it to two famous {\em integrable} system of classical mechanics, the Euler top and the Lagrange top \cite{HK, KH}. Surprisingly, the discretization scheme produced in both cases {\em integrable} maps. 

In \cite{PS, PPS1, PPS2} the authors undertook an extensive study of the properties of the Kahan's method when applied to integrable systems (we proposed to use in the integrable context the term ``Hirota-Kimura method''). It was demonstrated that, in an amazing number of cases, the method preserves integrability in the sense that the map $\Phi_f(x,\epsilon)$ possesses as many independent integrals of motion as the original system $\dot x=f(x)$.

Further remarkable geometric properties of the Kahan's method were discovered by Celledoni, McLachlan, Owren and Quispel.
\begin{theorem} {\bf\cite{CMOQ1}}
Consider a Hamiltonian vector field $f(x)=J\nabla H(x)$, where $J$ is a skew-symmetric $n\times n$ matrix, and the Hamilton function $H:\bbR^n \to\bbR$ is a polynomial of degree 3. Then the map $\Phi_f(x,\epsilon)$ possesses the following rational integral of motion: 
\beq\label{eq: CMOQ integral}
\t H(x,\epsilon) = H(x)+\frac{\ep}{3}  (\nabla H(x))^{\rm T} \left( I - \frac{\ep}{2} f'(x) \right)^{-1} f(x),
\eeq
as well as an invariant measure
\beq\label{eq: CMOQ measure}
\frac{dx_1\wedge\ldots\wedge dx_n}{\displaystyle{\det\left( I - \frac{\ep}{2} f'(x) \right)}}.
\eeq
The degree of the denominator $\det( I - \frac{\ep}{2} f'(x))$ of the function $\t H(x,\epsilon)$ is $n$, while the degree of the numerator of $\t H(x,\epsilon)$ is $n+1$.
\end{theorem}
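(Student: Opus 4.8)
\medskip
\noindent\textbf{Proof proposal.}
The plan is to prove the two invariance claims by exploiting the Hamiltonian structure $f=J\nabla H$, and only then to read off the degree bounds from the explicit rational expressions. The starting point is the observation that, since $\t x-x=\ep\,(I-\tfrac{\ep}{2}f'(x))^{-1}f(x)$, the function (\ref{eq: CMOQ integral}) is nothing but the ``three‑point'' expression
\[
\t H(x,\ep)=H(x)+\tfrac13\big\langle \nabla H(x),\ \t x-x\big\rangle ,\qquad \t x=\Phi_f(x,\ep),
\]
combined with the elementary identity, valid for any cubic polynomial $H:\bbR^n\to\bbR$ with gradient $g=\nabla H$ and Kahan polarization $\bar g$ (the symmetric form, affine in each argument, with $\bar g(x,x)=g(x)$):
\[
H(y)-H(x)=\tfrac13\big\langle y-x,\ g(x)+g(y)+\bar g(x,y)\big\rangle .
\]
This last identity I would prove by splitting $H$ into homogeneous parts: the cubic part collapses to $T(y,y,y)-T(x,x,x)$ by symmetry of the underlying trilinear form $T$, while the lower‑degree parts are immediate.

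Next I would use $f=J\nabla H$ twice. Since $J$ is a constant matrix, the Kahan polarization of $f$ factors as $\bar f=J\bar g$, so the Kahan equation (\ref{eq: Kahan gen}) reads $\t x-x=\ep\,J\bar g(x,\t x)$; pairing with $\bar g(x,\t x)$ and using $w^{\rm T}Jw=0$ yields $\langle\bar g(x,\t x),\t x-x\rangle=0$. Feeding this into the cubic identity with $y=\t x$ gives $H(\t x)-\tfrac13\langle g(\t x),\t x-x\rangle=H(x)+\tfrac13\langle g(x),\t x-x\rangle=\t H(x,\ep)$. On the other hand $\t H(\t x,\ep)=H(\t x)+\tfrac13\langle g(\t x),\Phi_f(\t x,\ep)-\t x\rangle$, so, subtracting and rewriting $\t x-x$ as $\t x-\Phi_f(\t x,-\ep)$ by reversibility (\ref{eq: reversible}),
\[
\t H(\t x,\ep)-\t H(x,\ep)=\tfrac13\big\langle g(\t x),\,\Phi_f(\t x,\ep)-\Phi_f(\t x,-\ep)\big\rangle=\tfrac{2\ep}{3}\big\langle g(\t x),\,(I-\tfrac{\ep^2}{4}f'(\t x)^2)^{-1}f(\t x)\big\rangle ,
\]
the resolvents $(I\mp\tfrac{\ep}{2}f'(\t x))^{-1}$ commuting as rational functions of $f'(\t x)$. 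Writing $f'=J\,{\rm Hess}\,H$ and using $J^{\rm T}=-J$ together with symmetry of ${\rm Hess}\,H$, a one‑line transpose computation shows that $(I-\tfrac{\ep^2}{4}f'(y)^2)J$, and hence also $(I-\tfrac{\ep^2}{4}f'(y)^2)^{-1}J$, is skew‑symmetric; since $f=Jg$, the last pairing vanishes, so $\t H$ is an integral of $\Phi_f$.

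For the invariant measure I would differentiate the implicit Kahan equation $\t x-x=\ep\,\bar f(x,\t x)$ in $x$. As $\bar f$ is symmetric and affine in each argument, $\partial_1\bar f(x,y)=\tfrac12 f'(y)$ and $\partial_2\bar f(x,y)=\tfrac12 f'(x)$, which gives $(I-\tfrac{\ep}{2}f'(x))\,\partial\t x/\partial x=I+\tfrac{\ep}{2}f'(\t x)$. Taking determinants and using $\det(I+\tfrac{\ep}{2}f'(y))=\det(I-\tfrac{\ep}{2}f'(y))$ in the Hamiltonian case (from $f'=J\,{\rm Hess}\,H$, $\det(I+AB)=\det(I+BA)$, and a transpose), one obtains $\det(\partial\t x/\partial x)=\det(I-\tfrac{\ep}{2}f'(\t x))/\det(I-\tfrac{\ep}{2}f'(x))$, which is precisely the invariance of the form (\ref{eq: CMOQ measure}). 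As for degrees: $\det(I-\tfrac{\ep}{2}f'(x))$ is the determinant of an $n\times n$ matrix with entries affine in $x$, hence of degree $\le n$; and writing $\t H=H+\tfrac13\langle g,\t x-x\rangle$ with $\t x-x=\ep\,{\rm adj}(I-\tfrac{\ep}{2}f'(x))f(x)/\det(I-\tfrac{\ep}{2}f'(x))$, the common denominator is $\det(I-\tfrac{\ep}{2}f'(x))$ and the numerator $H\det(I-\tfrac{\ep}{2}f'(x))+\tfrac{\ep}{3}\langle g,{\rm adj}(I-\tfrac{\ep}{2}f'(x))f\rangle$, a priori of degree $\le n+3$, loses its top two homogeneous components thanks to Euler's relation for the cubic part of $H$ (together with skew‑symmetry of $J$), leaving degree $\le n+1$.

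I expect the crux to be not a single computation but getting the two structural inputs right: the ``three‑point'' form of $\t H$ and the cubic identity above. Once these are in hand, both invariance statements reduce to the skew‑symmetry of $J$, and the Jacobian computation for the measure is routine. The most laborious point is then the cancellation of the two leading homogeneous parts in the numerator of $\t H$, but that is pure bookkeeping with homogeneous components.
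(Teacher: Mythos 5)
Your proof is correct, but note that the paper itself offers no proof of this statement: it is Theorem~1, imported verbatim from \cite{CMOQ1}, so there is nothing in the text to compare against. Your reconstruction in fact follows the same structural lines as the original CMOQ argument: the ``three-point'' rewriting $\t H=H(x)+\tfrac13\langle\nabla H(x),\t x-x\rangle$, the polarization identity $H(y)-H(x)=\tfrac13\langle y-x,\,g(x)+g(y)+\bar g(x,y)\rangle$ for cubics, the vanishing of $\langle\bar g(x,\t x),\t x-x\rangle$ by skew-symmetry of $J$, and the Jacobian relation $(I-\tfrac\ep2 f'(x))\,\partial\t x/\partial x=I+\tfrac\ep2 f'(\t x)$ combined with $\det(I+\tfrac\ep2 f')=\det(I-\tfrac\ep2 f')$ for the measure. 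I checked the individual steps (including the skew-symmetry of $(I-\tfrac{\ep^2}{4}f'^2)^{-1}J$) and they all hold.

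The only place where you substitute a promise for an argument is the degree bound $n+1$ on the numerator. Your asymptotic setup kills the degree-$(n+3)$ part cleanly via $\mathrm{adj}(-\tfrac\ep2 f_2')f_2=-\tfrac1\ep\det(-\tfrac\ep2 f_2')\,x$ and $\langle\nabla H_3,x\rangle=3H_3$, but that alone only gives degree $\le n+2$; the degree-$(n+2)$ part must be shown to vanish as well. It does: expanding $\t x-x=-x+w_0+O(|x|^{-1})$ one finds $w_0=-(f_2')^{-1}f_1-\tfrac2\ep(f_2')^{-1}x$, and with $g_2=\tfrac12(\mathrm{Hess}\,H_3)x$ the two resulting pairings reduce to $-\tfrac12 x^{\rm T}\nabla H_2=-H_2$ and $\tfrac12 x^{\rm T}J^{-1}x=0$, exactly the ``Euler relation plus skew-symmetry'' you invoke. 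So the sketch is sound, though you should either carry out this second-order cancellation explicitly or at least note that it is needed (and, strictly speaking, add the caveat that $J$ and $f_2'$ are taken generically invertible, the degenerate cases following by continuity, and that the stated degrees are generic upper bounds).
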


In the present paper, we will be studying the case $n=2$. Here, $\Phi_f$ is a birational planar map with an invariant measure and an integral of motion, thus completely integrable. Integral (\ref{eq: CMOQ integral}) in this case is given by 
\beq\label{eq: Kahan int}
\t H(x,y,\ep)= \frac{C(x,y,\ep)}{D(x,y,\ep)},
\eeq
where $C(x,y,\ep)$ is a polynomial in $x,y$ of degree 3, and $D(x,y,\ep)$ is a polynomial of degree 2. Thus, the level sets of the integral are cubic curves
$$
\cE_\lambda=\big\{(x,y): C(x,y,\ep)-\lambda D(x,y,\ep)=0\big\},
$$ 
which form a linear system (a pencil). Such a pencil is characterized by its base points (common points of all curves of the pencil). On each invariant curve, the map $\Phi_f$ induces a shift (respective to the addition law on this curve). In \cite{CMOQ2, KCMMOQ} it was shown that in several cases $\Phi_f$ can be represented as a {\em Manin transformation}, i.e., as a composition of two {\em Manin involutions}, as defined, e.g., in \cite[p. 35]{Ves}, \cite[Sect. 4.2]{Dui}.

In the present paper we prove that, actually, in the generic situation $\Phi_f$ admits six different representations as a Manin transformation.  Analyzing these representations, we arrive at an amazing geometric characterization:
\begin{itemize}
\item[] {\em A pencil of elliptic curves consists of invariant curves for the Kahan discretization of a planar quadratic Hamiltonian vector field if and only if  the six finite base points can be ordered so that the resulting hexagon has three pairs of parallel sides which pass through the three base points at infinity.}
\end{itemize}

The structure of the paper is as follows. In Section \ref{sect: pencils and manin}, we discuss the generalities about the pencils of cubic curves for Kahan discretizations, recall the difinitions of Manin involutions and Manin maps, as well as formulas for computing those maps. In Section \ref{sect: 6-fold}, we prove the six-fold representation of the Kahan discretization of a generic planar quadratic Hamiltonian vector field as a Manin map. Finally, in Section \ref{sect: geometry} we derive from the latter fact the remarkable geometric characterization given above.

\section{The geometry of the invariant cubic curves of the Kahan map}
\label{sect: pencils and manin}

The geometric properties of the birational map $\Phi_f$ become more uniform if we consider it in the complex domain,  i.e., as a map of the complex plane $\bbC^2$, and actually as a map of the projective complex plane $\bbC\bbP^2$. In particular, we assume that all the coefficients $a_i$ of the Hamilton function $H(x,y)$ are complex numbers, as well as its arguments $x,y$. 

This phase space is foliated by the one-parameter family (pencil) of invariant curves 
$$
\mathcal{E}_\lambda= \left\{ (x,y) \in \bbC^2 \,:\,
C(x,y,\ep) - \lambda D(x,y,\ep) =0\right\}.
$$
Indeed, for any initial point $(x_0,y_0)\in \bbC^2$ the orbit of $\Phi$  lies on the cubic curve $\cE_\lambda$ with $\lambda=\t H(x_0,y_0,\ep)=C(x_0,y_0,\ep)/D(x_0,y_0,\ep)$.

We consider $\bbC^2$ as an affine part of $\bbC\bbP^2$ consisting of the points $[x:y:z]\in\bbC\bbP^2$ with $z\neq 0$.
We define the projective curves $\bar\cE_\lambda$ as projective completion of $\cE_\lambda$: 
$$
\bar {\mathcal{E}}_\lambda= 
\left\{ [x:y:z] \in \mathbb{C}\mathbb{P}^2\,:\,
\bar C(x,y,z,\ep) - \lambda z\bar D(x,y,z,\ep) =0
\right\},
$$
where we set
$$
\bar C(x,y,z,\ep)=z^3C(x/z,y/z,\ep), \quad \bar D(x,y,z,\ep)=z^2D(x/z,y/z,\ep).
$$
We assume that the curve $\bar {\mathcal{E}}_0= \left\{ [x:y:z] \in \mathbb{C}\mathbb{P}^2\,:\,\bar C(x,y,z,\ep)  =0\right\}$ is nonsingular. Note that the second basis curve of the pencil, $\bar {\mathcal{E}}_\infty= \left\{ [x:y:z] \in \mathbb{C}\mathbb{P}^2\,:\,z\bar D(x,y,z,\ep)  =0\right\}$, is reducible, and consists of the conic $\{\bar D(x,y,z,\ep)=0\}$ and of the line at infinity $\{z=0\}$.

All curves $\bar\cE_\lambda$ pass through the set of {\em base points} which is defined as $\bar\cE_0\cap\bar\cE_\infty$. According to the Bezout theorem, there are nine base points, counted with multiplicities. In our specific context, there are three {\em base points at infinity}:
$$
\{F_1,F_2,F_3\}=\bar\cE_0\cap \{z=0\}, 
$$
and six further base points $\{B_1,\ldots B_6\}=\bar\cE_0\cap\{\bar D=0\}$. See Figure \ref{Fig1}. For any point of $P\in\bbC\bbP^2$ different from the base points, there is a unique  curve $\bar\cE_\lambda$ of the pencil such that $P\in\bar\cE_\lambda$.

\begin{figure}
\begin{center}
\includegraphics[width=0.5\textwidth]{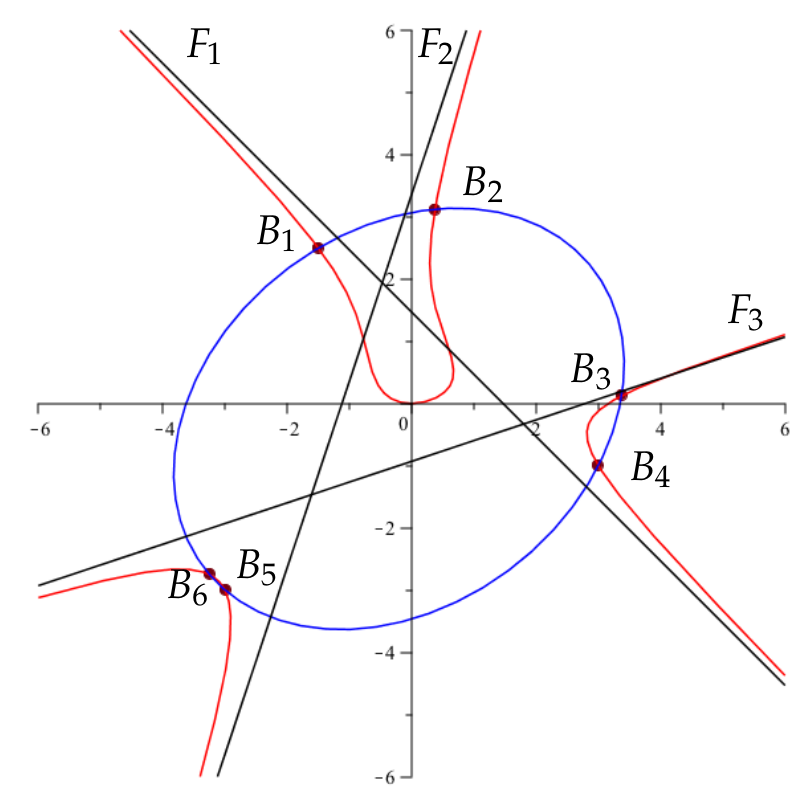}
\caption{The red curve is the cubic $C=0$ passing through the origin, the blue curve is the conic $D=0$. The black lines are the tangent lines to the red cubic at the points $F_1$, $F_2$, $F_3$ at infinity.  The finite base points are $B_1, \ldots, B_6$.}
\label{Fig1}
\end{center}
\end{figure}

Now we recall the definitions of Manin involutions and Manin transformations, following \cite[Sect. 4.2]{Dui}.  
\begin{definition}
\quad

{\em 1)}
Consider a nonsingular cubic curve $\bar\cE$ in $\bbC\bbP^2$, and a point $P_0\in\bar\cE$. The {\em Manin involution} on $\bar{\mathcal{E}}$ with respect to $P_0$ is the map $I_{\bar\cE, P_0}: \bar{\mathcal{E}}\rightarrow \bar{\mathcal{E}}$ defined as follows:
\begin{itemize}
\item For $P_1\neq P_0$, the point $P_2=I_{\bar\cE, P_0}(P_1)$ is the unique third intersection point of $\bar\cE$ with the line $(P_0P_1)$;
\item For $P_1=P_0$, the point $P_2=I_{\bar\cE, P_0}(P_1)$ is the unique second intersection point of $\bar\cE$ with the tangent line to $\bar\cE$ at $P_0=P_1$.
\end{itemize} 

{\em 2)} Let $P_0,P_1$ be two distinct points of the curve $\bar\cE$. The {\em Manin transformation} $M_{\bar\cE,P_0,P_1}:\bar\cE\to\bar\cE$ is defined as
\beq\label{Manin map loc}
M_{\bar\cE,P_0,P_1}=I_{\bar\cE,P_1}\circ I_{\bar\cE,P_0}.
\eeq
\end{definition}
It is easy to see that $M_{\bar\cE,P_0,P_1}(P_1)=P_0$. Indeed, if $P_2$ is the third intersection point of $\bar\cE$ with the line $(P_0P_1)$ then $I_{\bar\cE,P_0}(P_1)=P_2$ and $I_{\bar\cE,P_1}(P_2)=P_0$. With a natural addition law on the elliptic curve $\bar\cE$ (with a flex $O$ as a neutral element, so that $P_0+P_1+P_2=O$ if and only if the points $P_0$, $P_1$, $P_2$ are collinear), we can write $I_{\bar\cE,P_0}(P)=-(P_0+P)$, and
$$
M_{\bar\cE,P_0,P_1}(P)=I_{\bar\cE,P_1}(I_{\bar\cE,P_0}(P))=-(P_1+I_{\bar\cE,P_0}(P))=-(P_1-(P_0+P))=P+P_0-P_1,
$$  
so that $M_{\bar\cE,P_0,P_1}$ is the translation by $P_0-P_1$ on the elliptic curve $\bar\cE$. 
\begin{definition}
Consider a pencil $\frak E=\{\bar\cE_\lambda\}$ of cubic curves in $\bbC\bbP^2$ with at least one nonsingular member.

{\em 1)}   Let $B$ be a base point of the pencil. The {\em Manin involution} $I_{\frak E,B}:\bbC\bbP^2\dashrightarrow\bbC\bbP^2$ is a birational map defined as follows. For any $P\in\bbC\bbP^2$ which is not a base point, $I_{\frak E,B}(P)=I_{\bar\cE_\lambda,B}(P)$, where $\bar\cE_\lambda$ is the unique curve of the pencil containing the point $P$. 

{\em 2)}  Let $B_1,B_2$ be two distinct base points of the pencil. The {\em Manin transformation} $M_{\frak E,B_1,B_2}:\bbC\bbP^2\dashrightarrow\bbC\bbP^2$ is a birational map defined as
\beq\label{Manin map}
M_{\frak E,B_1,B_2}=I_{\frak E,B_2}\circ I_{\frak E,B_1}.
\eeq
\end{definition}

 We now provide formulas which can be used to compute Manin involutions.

\begin{lemma} \label{lemma involution inf}
Let $\bar\cE$ be a nonsingular cubic curve in $\bbC\bbP^2$ given in the non-homogeneous coordinates by the equation
$$
\cE: \quad u_1x^3+u_2x^2y+u_3xy^2+u_4y^3+u_5x^2+u_6xy+u_7y^2+u_8x+u_9y+u_{10}=0.
$$
Let  $P_0=(x_0,y_0)$ be a finite point on ${\mathcal{E}}$, and let $F=[1: \mu : 0]$ be a point of $\bar{\mathcal{E}}$ at infinity. If $P_1=I_{\bar\cE,P_0}(F)$ is finite, then it is given by
$$
P_1 = \Big( - (x_0 +\alpha), - \mu (x_0  + \alpha) + \beta \Big),
$$
where
\bea
&& \beta= y_0 - \mu x_0, \nn \\
&& \alpha= \frac{(3u_4\mu+u_3)\beta^2+(2u_7\mu+u_6)\beta+u_9\mu+u_8}{(3 u_4\mu^2 +2u_3\mu+u_2)\beta +u_7 \mu^2 + u_6 \mu +u_5}. \nn 
\eea
\end{lemma}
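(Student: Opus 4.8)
The plan is elementary: realize $P_1$ as the third intersection point of one explicit line with the cubic $\bar\cE$, and extract it via Vieta's formulas.

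First I would describe the line $(P_0F)$. The point $F=[1:\mu:0]$ is the point at infinity in the direction of slope $\mu$, so the projective line joining it to the finite point $P_0=(x_0,y_0)$ restricts, in the affine chart, to the line of slope $\mu$ through $P_0$, i.e.\ $y=\mu x+\beta$ with $\beta=y_0-\mu x_0$. This already produces the stated value of $\beta$.

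Next I would substitute $y=\mu x+\beta$ into the equation of $\cE$ and collect powers of $x$, obtaining a polynomial $g(x)=a_3x^3+a_2x^2+a_1x+a_0$ whose roots are the abscissae of the finite points of $\bar\cE$ lying on the line. A short expansion gives $a_3=u_1+u_2\mu+u_3\mu^2+u_4\mu^3$, which is precisely the homogeneous cubic part of $\cE$ evaluated at $[1:\mu:0]$ and therefore vanishes because $F\in\bar\cE$. Hence $g$ is in fact quadratic, reflecting the fact that the ``missing'' third intersection point of the line with $\bar\cE$ is $F$ itself, at infinity. Continuing the expansion one reads off that $a_2$ is the denominator and $a_1$ the numerator in the asserted formula for $\alpha$, so that $\alpha=a_1/a_2$. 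Since $P_0$ lies on both the line and $\bar\cE$, its abscissa $x_0$ is a root of $g$; writing $x_1$ for the other root, Vieta's formula gives $x_0+x_1=-a_1/a_2=-\alpha$, whence $x_1=-(x_0+\alpha)$, and substituting into $y=\mu x+\beta$ yields $y_1=-\mu(x_0+\alpha)+\beta$, which is the claimed point $P_1$.

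I do not expect any genuine obstacle. The only points deserving a word of justification are: that $g$ really drops to degree two, which is exactly the incidence $F\in\bar\cE$; and that the hypothesis ``$P_1$ is finite'' is precisely what guarantees $a_2\neq0$ (if $a_2=0$ the line would meet $\bar\cE$ in two points at infinity, so $P_1$ would be infinite), which makes the quotient defining $\alpha$ legitimate. The tangential subcase, where $(P_0F)$ is tangent to $\bar\cE$ at $P_0$, requires no separate treatment: then $x_0$ is a double root of $g$ and the formula correctly returns $x_1=x_0$, consistent with $I_{\bar\cE,P_0}(F)=P_0$.
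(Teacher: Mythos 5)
Your proposal is correct and follows essentially the same route as the paper's proof: substitute the line $y=\mu x+\beta$ into the cubic, observe that the leading coefficient $u_4\mu^3+u_3\mu^2+u_2\mu+u_1$ vanishes because $F\in\bar\cE$, and read off $x_1$ from Vieta's formula $x_0+x_1=-A_1/A_2$. Your additional remarks on why the degree drops, on the nonvanishing of the quadratic coefficient under the finiteness hypothesis, and on the tangential subcase are correct and slightly more careful than the paper's own write-up.
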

\begin{proof} Equation of the line $(P_0F)$ is $y=\mu x+\beta$. To find the second (finite) intersection point $P_1=(x_1,y_1)$ of this line with $\cE$, we substitute this equation into equation $U(x,y)=0$ of the cubic curve. We have:
\beq \label{involution cubic eq}
U(x,\mu x+\beta)=A_3x^3+A_2x^2+A_1x+A_0,
\eeq
where
\bea
A_3 & = & u_4\mu^3+u_3\mu^2+u_2\mu+u_1, \label{A3} \\
A_2 & = & (3u_4\mu^2+2u_3\mu+u_2)\beta+ u_7\mu^2+u_6\mu+u_5, \label{A2}\\
A_1 & = & (3u_4\mu+u_3)\beta^2+(2u_7\mu+u_6)\beta+u_9\mu+u_8, \label{A1} \\
A_0 & = & u_4\beta^3+u_7\beta^2+u_9\beta+u_{10}.
\eea
Moreover, $A_3=0$ since $F\in\bar\cE$. Thus, for $x_1$ we get a quadratic equation. By the Vieta formula, we have: $x_0+x_1=-\alpha$, where $\alpha=A_1/A_2$, which is the desired formula.
\end{proof}

\begin{lemma} \label{lemma involution finite}
Let $\bar\cE$ be a nonsingular cubic curve in $\bbC\bbP^2$ given in the non-homogeneous coordinates by the equation
$$
\cE: \quad u_1x^3+u_2x^2y+u_3xy^2+u_4y^3+u_5x^2+u_6xy+u_7y^2+u_8x+u_9y+u_{10}=0.
$$
Let $P_1=(x_1,y_1), P_2=(x_2,y_2)$ be two finite points on ${\mathcal{E}}$ with $x_1 \neq x_2$. If the point $P_3=I_{\bar\cE,P_1}(P_2)=I_{\bar\cE,P_2}(P_1)$ is finite, then it is given by
$$
P_3 = \big( -(x_1 +x_2 + \gamma), - (y_1 + y_2 + \delta)\big),
$$
where
\beq
\nu = \frac{y_2 -y_1}{x_2 -x_1}, \quad \beta= y_1 - \nu x_1=y_2-\nu x_2\nn,
\eeq
and
\bea
\gamma & = &  \frac{(3 u_4 \beta +u_7) \nu^2 + (2u_3 \beta +u_6)\nu +u_2 \beta +u_5}{u_4 \nu^3 +u_3 \nu^2 +u_2 \nu +u_1}, \label{gamma} \\
\delta & = &  \frac{u_7\nu^3-(u_3 \beta -u_6) \nu^2 - (2u_2 \beta -u_5)\nu -3u_1\beta}{u_4 \nu^3 +u_3 \nu^2 +u_2 \nu +u_1}. \label{delta}
\eea
\end{lemma}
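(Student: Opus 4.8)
The plan is to carry out the same computation as in the proof of Lemma~\ref{lemma involution inf}, with the point $F$ at infinity replaced by a finite point. Since $x_1\neq x_2$, the line $(P_1P_2)$ is non-vertical and can be written as $y=\nu x+\beta$, with $\nu$ and $\beta$ as in the statement (the two expressions for $\beta$ coincide precisely because $P_1$ and $P_2$ both lie on this line). Substituting $y=\nu x+\beta$ into the equation $U(x,y)=0$ of the cubic produces exactly the expansion (\ref{involution cubic eq})--(\ref{A1}) with $\mu$ replaced by $\nu$, namely
$$
U(x,\nu x+\beta)=B_3x^3+B_2x^2+B_1x+B_0
$$
with $B_3=u_4\nu^3+u_3\nu^2+u_2\nu+u_1$ and $B_2=(3u_4\beta+u_7)\nu^2+(2u_3\beta+u_6)\nu+(u_2\beta+u_5)$. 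The only difference from Lemma~\ref{lemma involution inf} is that here $B_3$ need not vanish: it vanishes precisely when $[1:\nu:0]\in\bar\cE$, and in that case the third intersection point of $\bar\cE$ with $(P_1P_2)$ lies at infinity, so $P_3$ is not finite. Hence, under the hypothesis of the lemma, $B_3\neq 0$.

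Because $P_1,P_2\in\cE$, the numbers $x_1,x_2$ are two of the three roots of the cubic $U(x,\nu x+\beta)$, so Vieta's formula gives $x_1+x_2+x_3=-B_2/B_3$. Setting $\gamma=B_2/B_3$ yields $x_3=-(x_1+x_2+\gamma)$, which is (\ref{gamma}). For the second coordinate I would use $y_3=\nu x_3+\beta$ together with $\nu x_i=y_i-\beta$ for $i=1,2$, obtaining
$$
y_3=-\nu(x_1+x_2+\gamma)+\beta=-(y_1+y_2)+3\beta-\nu\gamma=-\bigl(y_1+y_2+(\nu\gamma-3\beta)\bigr),
$$
so that $\delta=\nu\gamma-3\beta$; writing this over the common denominator $B_3$ and cancelling the $\beta\nu^k$-terms of $\nu\gamma$ against those of $3\beta B_3$ produces (\ref{delta}). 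Finally, the identity $I_{\bar\cE,P_1}(P_2)=I_{\bar\cE,P_2}(P_1)$ is immediate from the definition, since in both cases the point $P_3$ is the third intersection of $\bar\cE$ with one and the same line $(P_1P_2)$.

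There is no genuine obstacle: the whole argument is bookkeeping in the coefficients of the univariate cubic. The only point worth flagging is the degenerate case $B_3=0$, where the third intersection point escapes to infinity; this is excluded by the finiteness assumption on $P_3$, playing here a role opposite to that of the condition $A_3=0$ used in Lemma~\ref{lemma involution inf}.
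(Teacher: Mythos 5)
Your proof is correct and follows essentially the same route as the paper's: substitute the line $y=\nu x+\beta$ into the cubic, apply Vieta's formula $x_1+x_2+x_3=-A_2/A_3$, and recover $y_3$ from $y_3=\nu x_3+\beta$ with $\delta=\nu\gamma-3\beta$. Your extra remark that the finiteness hypothesis on $P_3$ is exactly what guarantees the leading coefficient $u_4\nu^3+u_3\nu^2+u_2\nu+u_1$ is nonzero is a sensible clarification that the paper leaves implicit.
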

\begin{proof}
Proceeding as before, we observe that in this case equation of the line $(P_1P_2)$ is $y=\nu x+\beta$. Thus, for the third intersection point $P_3=(x_3,y_3)$ of this line with $\cE$ we get a cubic equation (\ref{involution cubic eq}), where in the coefficients $A_i$ one should change notation $\mu$ to $\nu$.  We compute $x_3$ from the Vieta formula $x_1+x_2+x_3=-\gamma$, where $\gamma=A_2/A_3$ is found as in (\ref{gamma}). To compute $y_3$, we observe that
$$
y_3=\nu x_3+\beta=-\nu(x_1+x_2+\gamma)+\beta=-y_1-y_2+3\beta-\nu\gamma,
$$
and it remains to compute $\delta=\nu\gamma-3\beta$, which leads to (\ref{delta}).
\end{proof}

\section{Kahan discretization as Manin transformation}
\label{sect: 6-fold}

For computations described in this section, we will need concrete formulas for the map $\Phi_f$. The Hamiltonian vector field $f=J\nabla H$ for the Hamilton function
\beq \label{H}
H(x,y)= \tfrac{1}{3}a_1 x^3 +a_2 x^2y +a_3 x y^2 +\tfrac{1}{3}a_4 y^3+a_5 x^2 +2a_6 xy +a_7 y^2 +a_8 x+ a_9y
\eeq
is given by
\bea
\dot x & = & H_y\;=\; a_2x^2+2a_3xy+a_4y^2+2a_6x+2a_7y+a_9,  \label{k1} \\
\dot y & = & -H_x\;=\; -a_1x^2-2a_2xy-a_3y^2-2a_5x-2a_6y-a_8.    \label{k2}
\eea
(The normalization of the coefficients in the Hamilton function \eqref{H} is dictated by the desire to get rid of unnecessary factors in the equations of motion.)
The Kahan discretization of this system is the map $(\t x , \t y)= \Phi_f (x,y,\ep)$ defined by the equations of motion
\bea
(\t x-x)/\ep  & = & a_2 \t x x+a_3(\t xy+x \t y)+a_4\t yy+a_6(\t x+x)+a_7(\t y+y)+a_9,  \label{dk1}\\
(\t y-y)/\ep & = & -a_1\t xx-a_2(\t xy+x\t y)-a_3\t yy-a_5(\t x+x)-a_6(\t y +y)-a_8,        \label{dk2}
\eea
which can be solved for $\t x,\t y$ according to
\beq
\begin{pmatrix} \t x \\ \t y\end{pmatrix}=\begin{pmatrix} 1-\ep a_2 x-\ep a_3 y-\ep a_6 & -\ep a_3 x-\ep a_4 y-\ep a_7\\
\ep a_1 x+\ep a_2 y+\ep a_5  & 1+\ep a_2 x +\ep a_3 y+\ep a_6
\end{pmatrix}^{-1}\begin{pmatrix}  x+\ep a_6 x+\ep a_7 y +\ep a_9 \\
                                                      y-\ep a_5 x-\ep a_6 y-\ep a_8
\end{pmatrix}.
\eeq
As a result, 
\beq \label{eq: Kahan map}
\t x=\frac{R(x,y)}{D(x,y)}, \quad \t y=\frac{S(x,y)}{D(x,y)},
\eeq
where $R$, $S$ and $D$ are polynomials of degree 2. They can be found in Appendix \ref{Appendix}. 
The integral (\ref{eq: CMOQ integral}) in the case $n=2$ is given by 
\beq\label{eq: Kahan int}
\t H(x,y,\ep)= \frac{C(x,y,\ep)}{D(x,y,\ep)},
\eeq
where $C(x,y,\ep)$ is a polynomial in $x,y$ of degree 3, also given  in Appendix \ref{Appendix}. 

Consider the pencil of cubic curves $\frak E=\{\bar\cE_\lambda\}$ which are level sets of the integral $\t H$. Recall that this pencil has three base points $F_1, F_2, F_3$ at infinity, as well as six further base points $B_1,\ldots, B_6$ which are intersection points of the cubic curve $\bar C=0$ with the conic $\bar D=0$.

\begin{theorem}
To every base point $F$ at infinity, there correspond two base points $B$, $B'$, such that the Kahan map $\Phi$ is a Manin transformation in two different ways:
\beq \label{eq main}
\Phi = I_{\frak E,B} \circ I_{\frak E,F} = I_{\frak E,F} \circ I_{\frak E,B'}.
\eeq
\end{theorem}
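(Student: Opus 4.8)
\emph{Proof proposal.} The plan is to reduce \eqref{eq main} to a concrete statement about where the Kahan map sends the base points of $\frak E$, and then to read that statement off from the explicit form of $\Phi$. First recall that $\Phi$ preserves $\frak E$ and restricts to each nonsingular member $\bar\cE_\lambda$ as a translation, by some element $\tau(\lambda)\in\bar\cE_\lambda$; and that, as recalled above, for distinct base points $P_0,P_1$ the Manin transformation $I_{\frak E,P_1}\circ I_{\frak E,P_0}$ restricts to $\bar\cE_\lambda$ as the translation by the intrinsically defined element $P_0-P_1$. Since a translation of an elliptic curve is determined by its value at one point, for a finite base point $B$ and a base point $F$ at infinity one has
$$
\Phi=I_{\frak E,B}\circ I_{\frak E,F}\ \Longleftrightarrow\ \tau(\lambda)=F-B\ \ \forall\,\lambda\ \Longleftrightarrow\ \Phi|_{\bar\cE_\lambda}(B)=F\ \ \forall\,\lambda ,
$$
and the same statement applied to $\Phi^{-1}$ (again a fibrewise translation of $\frak E$) gives $\Phi=I_{\frak E,F}\circ I_{\frak E,B'}\iff\Phi^{-1}|_{\bar\cE_\lambda}(B')=F$ for all $\lambda$. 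All restrictions make sense because $B,B',F$ lie on every member of $\frak E$. Thus it suffices, for each $F\in\{F_1,F_2,F_3\}$, to exhibit a finite base point $B$ with $\Phi(B)=F$ and a finite base point $B'$ with $\Phi^{-1}(B')=F$, these values being the same on every member of $\frak E$.

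For the existence of $B$: by \eqref{eq: Kahan map}, any point at which $D(x,y)=0$ and $R,S$ do not both vanish is sent by $\Phi$ to the point $[R(x,y):S(x,y):0]$ at infinity. The six base points $B_i$ lie on the conic $\{\bar D=0\}$, and $\Phi$ preserves the nonsingular member $\bar\cE_0=\{\bar C=0\}$, on which it acts as an automorphism. Hence, whenever $B_i$ is not a point of indeterminacy of $\Phi$, its image $[\bar R(B_i):\bar S(B_i):0]$ lies both at infinity and on $\bar\cE_0$, i.e.\ it is one of $F_1,F_2,F_3$, and it evidently does not depend on $\lambda$. Because $\Phi|_{\bar\cE_0}$ is injective, these images are pairwise distinct; there being at most three of them, at least three of the $B_i$ must be points of indeterminacy of $\Phi$. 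Since $\Phi$ is a quadratic Cremona map it has exactly three points of indeterminacy; so exactly three of the $B_i$ are such points and the remaining three are carried by $\Phi$ bijectively onto $\{F_1,F_2,F_3\}$. For each $F=F_j$ this gives a base point $B$ with $\Phi|_{\bar\cE_\lambda}(B)=F$ for all $\lambda$, hence $\Phi=I_{\frak E,B}\circ I_{\frak E,F}$.

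For the existence of $B'$: by \eqref{eq: reversible}, $\Phi^{-1}$ is the Kahan map with $\epsilon\mapsto-\epsilon$; its integral $\tilde H(x,y,-\epsilon)$ is a function of $\tilde H(x,y,\epsilon)$, so $\Phi^{-1}$ preserves the same pencil $\frak E$, and since $D(x,y,\epsilon)$ is even in $\epsilon$ (visible from the matrix above) it has the same conic $\{\bar D=0\}$ and the same base points. Repeating the argument for $\Phi^{-1}$ shows that exactly three of the $B_i$ are points of indeterminacy of $\Phi^{-1}$ and the other three are carried bijectively onto $\{F_1,F_2,F_3\}$ by $\Phi^{-1}$; for $F=F_j$ this yields $B'$ with $\Phi^{-1}|_{\bar\cE_\lambda}(B')=F$ for all $\lambda$, hence $\Phi=I_{\frak E,F}\circ I_{\frak E,B'}$. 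Since (in the generic situation) the points of indeterminacy of $\Phi$ and of $\Phi^{-1}$ are disjoint, their complements in $\{B_1,\dots,B_6\}$ are disjoint as well, so $B\neq B'$; letting $F$ run over $\{F_1,F_2,F_3\}$ then displays all six $B_i$ and all six Manin representations of $\Phi$.

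The main obstacle I expect is the indeterminacy bookkeeping used above: one must check that $\Phi$ is a genuine quadratic Cremona map (that the homogenizations $\bar R,\bar S,\bar D$ have no common factor and that the degree does not drop), and that in the generic situation the indeterminacy loci of $\Phi$ and $\Phi^{-1}$ are disjoint. I would handle these either by passing to the rational elliptic surface obtained by blowing up the nine base points — on which $\Phi$ becomes a fibrewise translation by a section and \eqref{eq main} is a linear identity in the Mordell--Weil group — or, in the spirit of Lemmas \ref{lemma involution inf} and \ref{lemma involution finite}, by direct elimination: compose $\Phi$ from Appendix \ref{Appendix} with $I_{\frak E,F}$ from Lemma \ref{lemma involution inf}, and verify that the result coincides with the map of Lemma \ref{lemma involution finite} attached to the base point $B$ identified above (this last check also pins down, for each $F_j$, which of the $B_i$ plays the role of $B$).
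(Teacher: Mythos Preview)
Your argument is essentially correct and follows a route genuinely different from the paper's. The paper reformulates $\Phi = I_{\frak E,B}\circ I_{\frak E,F}$ as the requirement that $I_{\frak E,\Phi(P)}\circ I_{\frak E,P}(F)$ be a base point for every $P$; continuity then reduces this to a single $P$, and the translation-covariance of Kahan discretization lets one take $P_0=(0,0)$ at the cost of letting the Hamiltonian vary. What remains is a concrete claim---that $D$ vanishes at an explicit point built from Lemmas~\ref{lemma involution inf} and~\ref{lemma involution finite}---which is verified by a Maple computation exhibiting a factorization by the cubic $u_4\mu^3+u_3\mu^2+u_2\mu+u_1$ whose roots are the slopes of the $F_j$. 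Your reduction to the single condition $\Phi(B)=F$ (respectively $\Phi^{-1}(B')=F$) is cleaner, and your counting via the indeterminacy locus of a quadratic plane Cremona map is a structural shortcut that bypasses the symbolic computation entirely: once you know the three indeterminacy points must lie among the $B_i$, injectivity of $\Phi|_{\bar\cE_0}$ forces the remaining three $B_i$ to hit $\{F_1,F_2,F_3\}$ bijectively. The cost is exactly the bookkeeping you flag: you need that $\bar R,\bar S,\bar D$ from Appendix~\ref{Appendix} share no common factor (so the degree really is $2$ and the Noether relations give three simple base points), and you implicitly use that indeterminacy points of a pencil-preserving birational map are base points of the pencil. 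Both are routine in the generic case. The paper's approach is more computational but entirely self-contained and uniform in the parameters; yours is more conceptual and closer in spirit to the elliptic-surface picture you sketch at the end.
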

Altogether, the Kahan map $\Phi$ is a Manin transformation in six different ways:
\bea
\Phi & = &  I_{\frak E,B_1} \circ I_{\frak E,F_1} = I_{\frak E,F_1} \circ I_{\frak E,B_4} \label{eq main 12}\\
       & = &  I_{\frak E,B_5} \circ I_{\frak E,F_2} = I_{\frak E,F_2} \circ I_{\frak E,B_2} \label{eq main 34}\\
       & = &  I_{\frak E,B_3} \circ I_{\frak E,F_3} = I_{\frak E,F_3} \circ I_{\frak E,B_6}. \label{eq main 56}
\eea

\begin{proof}
Both statements in \eqref{eq main} are proved similarly, therefore we restrict ourselves to the first one. We start with the following reformulation:
\begin{align}
& \Phi(P)=I_{\frak E,B} \circ I_{\frak E,F}(P)  \nonumber\\
& \qquad \Leftrightarrow \quad I_{\frak E,B}(\Phi(P))=I_{\frak E,F}(P) \nonumber\\
& \qquad \Leftrightarrow \quad I_{\frak E,\Phi(P)}(B)=I_{\frak E,P}(F) \nonumber\\
& \qquad \Leftrightarrow \quad B=I_{\frak E,\Phi(P)}\circ I_{\frak E,P}(F). \label{eq to show gen}
\end{align}
Thus, it is sufficient to prove that there exists a base point $B$ such that \eqref{eq to show gen} is satisfied for all $P$ (for which its right hand side is defined). Since the expression $I_{\frak E,\Phi(P)}\circ I_{\frak E,P}(F)$ is a continuous function of $P$ on its definition domain, which is a connected set, and since the set of base points is finite, it is sufficient to prove that, for any point $P$, its image $I_{\frak E,\Phi(P)}\circ I_{\frak E,P}(F)$ is a base point. Since translations of the argument by vectors in $\bbC^2$ act transitively on the set of cubic Hamilton functions, it is sufficient to prove the statement just for one point $P_0$ (and for all Hamilton functions). We will choose $P_0=(0,0)\in\bbC^2$. Thus, the proof of the theorem boils down to the following statement.
\begin{lemma}
The point $I_{\frak E,\Phi(0,0)}\circ I_{\frak E,(0,0)}(F)$ is a base point of the pencil $\frak E$. More precisely, the polynomial $D(x,y,\ep)$ vanishes at this point.
\end{lemma}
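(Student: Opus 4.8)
The plan is to turn the displayed reduction into a completely explicit computation at the single point $P_0=(0,0)$, using Lemma~\ref{lemma involution inf}, Lemma~\ref{lemma involution finite}, and the explicit form \eqref{dk1}--\eqref{dk2} of the Kahan map. First I would fix the data. The base point $F$ at infinity is $F=[1:\mu:0]$, where $\mu$ is one of the three roots (in the variable $\mu$) of the leading cubic form $C_3(1,\mu)$ of $C(x,y,\ep)$; note $F\in\bar\cE_\lambda$ for every $\lambda$, which is exactly the condition $A_3=0$ appearing in Lemma~\ref{lemma involution inf}. The unique curve $\bar\cE_{\lambda_0}$ of the pencil through $(0,0)$ has $\lambda_0=C(0,0,\ep)/D(0,0,\ep)$; in affine coordinates it is $C(x,y,\ep)-\lambda_0D(x,y,\ep)=0$, and I would record its coefficients $u_1,\dots,u_{10}$ in the monomial order of Lemmas~\ref{lemma involution inf}--\ref{lemma involution finite} ($u_1,\dots,u_4$ come from $C$ alone since $D$ is quadratic, the remaining $u_i$ being the coefficients of $C-\lambda_0D$).

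Then one carries out three steps. \emph{(i)} Apply Lemma~\ref{lemma involution inf} with $x_0=y_0=0$: here $\beta=0$ and $\alpha=(u_9\mu+u_8)/(u_7\mu^2+u_6\mu+u_5)$, so $Q:=I_{\frak E,(0,0)}(F)=(-\alpha,-\mu\alpha)$. \emph{(ii)} Solve \eqref{dk1}--\eqref{dk2} at $x=y=0$, a $2\times2$ linear system whose determinant is $D(0,0,\ep)=1+\ep^2(a_5a_7-a_6^2)$, to obtain $\Phi(0,0)=\big(R(0,0)/D(0,0),\,S(0,0)/D(0,0)\big)$. \emph{(iii)} Apply Lemma~\ref{lemma involution finite} on the curve $\bar\cE_{\lambda_0}$ with $P_1=Q$ and $P_2=\Phi(0,0)$ (both lie on $\bar\cE_{\lambda_0}$ since $(0,0)\in\bar\cE_{\lambda_0}$ and $\Phi$ preserves the pencil); this produces the candidate $B=I_{\frak E,\Phi(0,0)}(Q)=\big(-(x_1+x_2+\gamma),\,-(y_1+y_2+\delta)\big)$ with $\gamma,\delta$ as in \eqref{gamma}--\eqref{delta}. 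By construction $B$ lies on $\bar\cE_{\lambda_0}$, and its $z$-coordinate is nonzero, so $D(B)=0$ would force $\bar C(B)=0$ as well, whence $B$ is one of the finite base points $B_1,\dots,B_6$. Everything thus reduces to checking the single scalar identity $D(B)=0$.

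This last check is the whole content and the only serious obstacle. After substituting the expressions of steps (i)--(iii), $D(B)$ becomes a rational function of $a_1,\dots,a_9,\ep$ and the auxiliary parameter $\mu$, and the claim is that it vanishes \emph{modulo} the relation $C_3(1,\mu)=0$ expressing $F\in\bar\cE_{\lambda_0}$: concretely, clear all denominators, reduce the resulting polynomial in $\mu$ by the cubic $C_3(1,\mu)$, and verify that the remainder is identically zero in $a_1,\dots,a_9$ and $\ep$. This is a bulky but entirely mechanical symbolic computation, best done with a computer algebra system; the only care needed is to track the non-vanishing denominators ($D(0,0,\ep)$, $u_7\mu^2+u_6\mu+u_5$, $x_1-x_2$, and the cubic denominator in \eqref{gamma}--\eqref{delta}) so that the genericity hypotheses of the two lemmas are in force, and, at the end, to read off which base point $B_j$ arises from each $F_i$ in order to obtain the labelled list \eqref{eq main 12}--\eqref{eq main 56}. (As a conceptual sanity check, writing the Manin involution on $\bar\cE_{\lambda_0}$ as $I_{\bar\cE_{\lambda_0},A}(X)=-(A+X)$ for some flex taken as group origin, one gets $I_{\frak E,\Phi(0,0)}\circ I_{\frak E,(0,0)}(F)=F+P_0-\Phi(P_0)$, i.e.\ $F$ minus the translation realized by $\Phi$ on that curve; the computation is exactly the statement that this is independent of $\lambda_0$, hence a genuine base point.) Finally, the companion identity $\Phi=I_{\frak E,F}\circ I_{\frak E,B'}$ in \eqref{eq main} follows by the identical argument combined with the reversibility \eqref{eq: reversible}, $\Phi^{-1}(x,\ep)=\Phi(x,-\ep)$, or directly in the same fashion.
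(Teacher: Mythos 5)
Your proposal follows essentially the same route as the paper: compute $P_1=I_{\frak E,(0,0)}(F)$ via Lemma~\ref{lemma involution inf} (with $\beta=0$), compute $\Phi(0,0)$ explicitly, apply Lemma~\ref{lemma involution finite}, and verify by a computer-algebra computation that the numerator of $D$ at the resulting point is divisible by $u_4\mu^3+u_3\mu^2+u_2\mu+u_1$, i.e.\ vanishes when $F$ is a base point at infinity. The argument and the reduction to a single polynomial identity modulo the cubic in $\mu$ coincide with the paper's proof, so the proposal is correct as it stands.
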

This lemma is proved by a direct computation which is outlined as follows.
\smallskip

1) Determine the pencil parameter of the level curve containing the point $(0,0)$, that is, $\lambda_0=\t H(0,0,\epsilon)=c_{10}/d_6$. 

2) Compute the coefficients of the level curve $\bar\cE_{\lambda_0}$, which are $u_i= c_i$ for $i=1,\ldots,4$, and $u_i= c_i-\lambda_0d_{i-4}$ for $i=5,\ldots,10$. Of course, $u_{10}=0$. 

3) Compute, according to Lemma \ref{lemma involution inf}:
\beq
P_1=I_{\bar\cE_{\lambda_0},(0,0)}(F)=(x_1,y_1),
\eeq
where
\beq\label{P1}
x_1= -\frac{u_9\mu+u_8}{u_7 \mu^2 + u_6 \mu +u_5}, \quad y_1= -\frac{u_9\mu^2+u_8\mu}{u_7 \mu^2 + u_6 \mu +u_5}.  
\eeq

4) Compute 
\beq
P_2=\Phi(0,0)=(x_2,y_2),
\eeq
where
\beq\label{P2}
x_2=\frac{\ep a_9+\ep^2(a_6a_9-a_7a_8)}{1+\ep^2( a_5 a_7-a_6^2 )}, \quad
y_2=\frac{-\ep a_8+\ep^2(a_6a_8-a_5a_9)}{1+\ep^2( a_5 a_7-a_6^2 )}.
\eeq

5) Compute $I_{\frak E,\Phi(0,0)}\circ I_{\frak E,(0,0)}(F)$ by applying Lemma \ref{lemma involution finite} with the point $P_1=(x_1,y_1)$ given in (\ref{P1}) and with the point $P_2=(x_2,y_2)$ given in (\ref{P2}). The coordinates of the resulting point $P_3=I_{\bar\cE_{\lambda_0},P_1}(P_2)$ are rational functions of $\mu$ with numerators and denominators of degree 8. 

6) Compute the value of the quadratic polynomial $D(x,y,\ep)$ at the point $P_3$. This value is a rational function of $\mu$. Its numerator is a polynomial of $\mu$ of degree 16. Maple computation shows that this polynomial is divisible by $u_4\mu^3+u_3\mu^2+u_2\mu+u_1$.
The following tricks can be used to lower the complexity of computations. First, for homogeneity reasons, it is sufficient to take $\ep=1$. Second, we can use affine transformations of the plane $(x,y)$ (under which the Kahan discretization is covariant) to achieve vanishing of two of the coefficients of the polynomial $H$, for instance, $a_7=0$ and $a_9=0$.  

As a consequence of this remarkable factorization, the numerator of $D(P_3,\ep)$ vanishes, as soon as $F$ is a base point at infinity, that is, as soon as $u_4\mu^3+u_3\mu^2+u_2\mu+u_1=0$. 
\end{proof}

\section{Geometry of the pencil of elliptic curves for the Kahan discretization}
\label{sect: geometry}

We now study the geometric consequences of the six-fold representation of the Kahan discretization as Manin maps. We restrict ourselves to the generic case, when the base points $B_1,\ldots,B_6$ are finite and pairwise distinct.
\begin{theorem}\label{Th geometry}
The lines $(B_1B_2)$ and $(B_4B_5)$ are parallel and pass through the point $F_3$ at infinity. Similarly, the lines $(B_2B_3)$ and $(B_5B_6)$ are parallel and pass through the point $F_1$ at infinity, and the lines $(B_3B_4)$ and $(B_6B_1)$ are parallel and pass through the point $F_2$ at infinity.
\end{theorem}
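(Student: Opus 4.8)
The plan is to extract the geometric statement directly from the six-fold Manin representation \eqref{eq main 12}--\eqref{eq main 56}, using only the elementary properties of Manin involutions recalled in Section \ref{sect: pencils and manin}. The key observation is that each identity of the form $\Phi = I_{\frak E,B}\circ I_{\frak E,F}$ carries collinearity information: since $I_{\frak E,F}$ sends a point $P$ on the curve $\bar\cE_\lambda$ to the third intersection of $(PF)$ with $\bar\cE_\lambda$, and $F$ is a fixed point at infinity, the triple $P$, $I_{\frak E,F}(P)$, $F$ is collinear. The idea is to evaluate these identities at the base points themselves, where the curve of the pencil is not unique but the involutions still make sense on any chosen member of the pencil — and more importantly, to use the ``bilinear'' reformulation $I_{\frak E,\Phi(P)}\circ I_{\frak E,P}(F)=B$ that already appeared in the proof of \eqref{eq main}.

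First I would establish, working on a fixed nonsingular member $\bar\cE$ of the pencil and using the translation description $M_{\bar\cE,P_0,P_1}(P)=P+P_0-P_1$, that \eqref{eq main 12} forces a relation among the base points in the group law of $\bar\cE$. Concretely, from $\Phi=I_{\frak E,B_1}\circ I_{\frak E,F_1}$ restricted to $\bar\cE$ one gets that $\Phi|_{\bar\cE}$ is translation by $F_1-B_1$; from $\Phi=I_{\frak E,F_1}\circ I_{\frak E,B_4}$ it is translation by $B_4-F_1$; hence $F_1-B_1=B_4-F_1$, i.e. $B_1+B_4=2F_1$ in the group law of $\bar\cE$ — for any nonsingular $\bar\cE$ in the pencil, hence a genuine geometric relation. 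Similarly $B_5+B_2=2F_2$ and $B_3+B_6=2F_3$. These say that $B_1,B_4$ are symmetric with respect to $F_1$ on every curve through all nine base points, which is the group-theoretic shadow of a parallelism statement once $F_1$ is at infinity.

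Next I would translate $B_1+B_4=2F_1$ into the desired Euclidean statement. The cleanest route is: the line $(B_1B_4)$ meets $\bar\cE$ in a third point $Q$ with $B_1+B_4+Q=O$ (choosing the flex $O$ as neutral element), while the tangent to $\bar\cE$ at $F_1$ meets $\bar\cE$ in a point $Q'$ with $2F_1+Q'=O$; from $B_1+B_4=2F_1$ we get $Q=Q'$. But I actually want a statement about $(B_1B_2)$ and $(B_4B_5)$, not $(B_1B_4)$, so the combinatorics must be tracked carefully: combining $B_1+B_4=2F_1$, $B_2+B_5=2F_2$, $B_3+B_6=2F_3$ with the fact that the six points $B_1,\dots,B_6$ lie on the conic $\bar D=0$ — equivalently $B_1+\cdots+B_6$ equals a fixed divisor class (the intersection of $\bar\cE$ with a conic, which is $-2O$ times the hyperplane-section class, i.e. in the group law $B_1+\cdots+B_6 = 2(F_1+F_2+F_3)$ automatically, since $F_1+F_2+F_3$ is the class of the line at infinity $z=0$ cut on $\bar\cE$, and the conic class is twice the line class). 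Then $B_1+B_2 \equiv (2F_1 - B_4) + (2F_2 - B_5)$, and I must show this equals $-F_3 + (\text{something})$ so that $(B_1B_2)$ passes through $F_3$: i.e. I need $B_1+B_2+F_3=O$, equivalently $B_1+B_2 = -F_3 = F_1+F_2$ (using $F_1+F_2+F_3=O$). Checking that the six group relations $B_i+B_{i+3}=2F_\bullet$ together with the conic relation pin down $B_1+B_2=F_1+F_2$, $B_4+B_5=F_1+F_2$, etc., is a short linear-algebra computation in the group; once it holds, $B_1+B_2+F_3=O$ means $B_1,B_2,F_3$ are collinear, and $B_4+B_5+F_3=O$ means $B_4,B_5,F_3$ are collinear, and since $F_3$ is at infinity the two lines $(B_1B_2)$ and $(B_4B_5)$ are parallel and both pass through $F_3$. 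The same argument with indices permuted gives the other two pairs.

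The main obstacle I anticipate is bookkeeping the index labelling: the theorem's specific pairing $(B_1B_2)\|(B_4B_5)$ through $F_3$, $(B_2B_3)\|(B_5B_6)$ through $F_1$, $(B_3B_4)\|(B_6B_1)$ through $F_2$ is not symmetric under arbitrary relabelling, so I must fix the labels of $B_1,\dots,B_6$ to be exactly those produced by the six-fold representation in \eqref{eq main 12}--\eqref{eq main 56} and then verify the group relations are consistent with — in fact force — the claimed collinearities; there is a genuine risk of an off-by-one in which $B_i$ pairs with which, which can only be resolved by carefully reading off, from the proof of the previous theorem, \emph{which} base point $B$ is produced by $I_{\frak E,\Phi(P)}\circ I_{\frak E,P}(F_j)$ for each $j$. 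A secondary (but purely formal) point is to justify working in the group law uniformly across the pencil: the relations $B_i+B_{i+3}=2F_\bullet$ are derived on one nonsingular curve but, being statements about the finitely many fixed base points, they hold identically, and collinearity/parallelism is a closed condition, so no continuity subtlety beyond what was already used in the proof of \eqref{eq main} is needed.
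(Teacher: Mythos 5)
Your overall strategy --- reading off the translation amount of $\Phi$ on each nonsingular curve of the pencil from the Manin representations and converting the resulting group-law identities into collinearity statements --- is exactly the paper's, and the setup is sound. But the specific chain of deductions you propose has a genuine gap in the middle. From the two representations in \eqref{eq main 12} you extract only the \emph{within-pair} equality $F_1-B_1=B_4-F_1$, i.e.\ $B_1+B_4=2F_1$, and similarly $B_2+B_5=2F_2$, $B_3+B_6=2F_3$. You then claim that these three relations, together with the conic relation $B_1+\cdots+B_6=2(F_1+F_2+F_3)$, ``pin down'' $B_1+B_2=F_1+F_2$ by a short linear-algebra computation. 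They do not: the conic relation is precisely the sum of the three pair relations, so it carries no new information, and the configuration
$$
B_1=F_1-t_1,\; B_4=F_1+t_1,\quad B_2=F_2+t_2,\; B_5=F_2-t_2,\quad B_3=F_3-t_3,\; B_6=F_3+t_3
$$
with three \emph{independent} parameters $t_1,t_2,t_3$ satisfies all four of your relations while giving $B_1+B_2=F_1+F_2+(t_2-t_1)\neq -F_3$ whenever $t_1\neq t_2$. The missing input is the \emph{cross-pair} information, which is available for free and is what the paper actually uses: since all six compositions in \eqref{eq main 12}--\eqref{eq main 56} are the \emph{same} map $\Phi$, all six translation amounts coincide,
$$
F_1-B_1=B_4-F_1=B_2-F_2=F_2-B_5=F_3-B_3=B_6-F_3.
$$
From $F_1-B_1=B_2-F_2$ one gets directly $B_1+B_2=F_1+F_2=-F_3$, hence $B_1+B_2+F_3=O$ and the collinearity of $B_1,B_2,F_3$; from $B_4-F_1=F_2-B_5$ one gets $B_4+B_5=-F_3$ likewise, and parallelism follows because $F_3$ is at infinity. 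This one-line replacement of your ``linear algebra in the group'' step closes the gap. (Incidentally, the relations $B_1+B_4=2F_1$, etc., that you did derive are the content of the paper's Corollary about the tangent lines at the points at infinity, not of Theorem \ref{Th geometry}.)
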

\begin{proof}
Consider an arbitrary invariant curve of the pencil, along with the addition law on this curve (assuming that the neutral element $O$ is a flex, so that $P_1+P_2+P_3=O$ is equivalent to collinearity of the points $P_1$, $P_2$, $P_3$). In particular, since all three points $F_1$, $F_2$, $F_3$ lie on the line at infinity, we have:
\beq
F_1+F_2+F_3=O.
\eeq
Now write down equations \eqref{eq main 12}-\eqref{eq main 56} in terms of this addition law. We have:
\beq
F_1-B_1= B_2-F_2=F_3-B_3=B_4-F_1=F_2-B_5=B_6-F_3.
\eeq
Consider, for instance, the first of these equations. We have:
$$
B_1+B_2=F_1+F_2=-F_3 \quad \Leftrightarrow\quad B_1+B_2+F_3=O.
$$
Thus, the straight line $(B_1B_2)$ passes through $F_3$. See Figure \ref{Fig2}.
\end{proof}

\begin{figure}
\begin{center}
\includegraphics[width=0.5\textwidth]{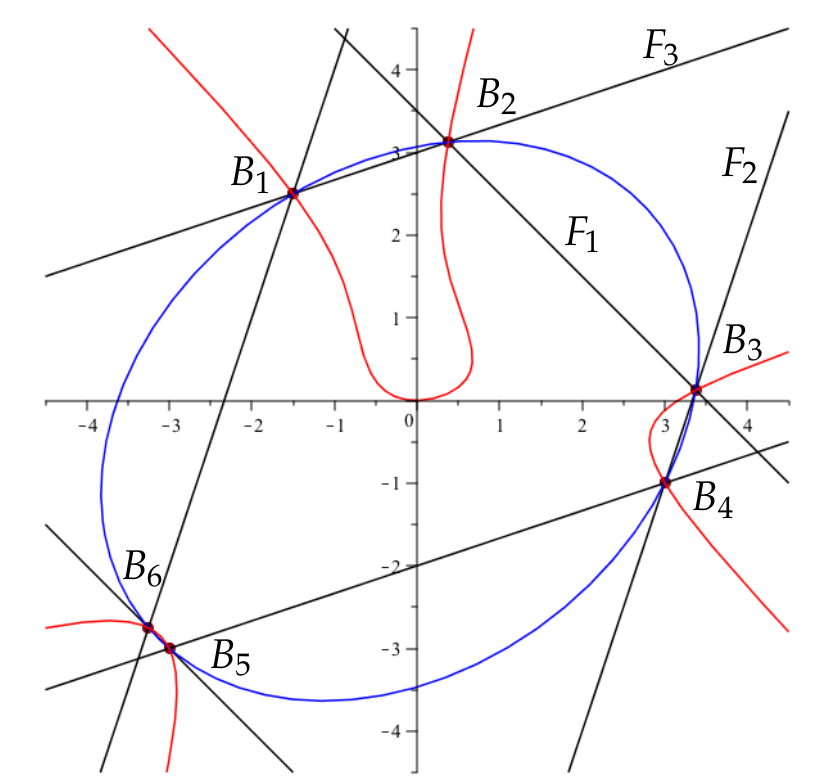}
\caption{The opposite side lines of the hexagon $B_1B_2B_3B_4B_5B_6$ are parallel.}
\label{Fig2}
\end{center}
\end{figure}

\begin{corollary}
For each cubic curve of the pencil, the second intersection points of the curve with the tangents at the points $F_1$, $F_2$, $F_3$ at infinity, lie on the lines connecting the corresponding base points, $(B_1B_4)$, $(B_2B_5)$, and $(B_3B_6)$, respectively.
\end{corollary}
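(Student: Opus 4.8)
The plan is to work entirely in the group law on a fixed nonsingular cubic curve $\bar\cE_\lambda$ of the pencil, using the same normalization as in the proof of Theorem \ref{Th geometry}: the neutral element $O$ is a flex, so that three points of $\bar\cE_\lambda$ are collinear precisely when their sum is $O$. I would first record the two ingredients supplied by the preceding results. From Theorem \ref{Th geometry} (more precisely, from the chain of equalities in its proof), we have on the curve the relations $B_1+B_2+F_3=O$, etc.; in particular the pairs of base points satisfy $B_1+B_4 = (F_1-B_1)+(B_1) + \dots$ — more usefully, from $F_1-B_1 = B_4-F_1$ we get $B_1+B_4 = 2F_1$. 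Similarly $B_2+B_5 = 2F_2$ and $B_3+B_6 = 2F_3$. That is the one extra fact beyond Theorem \ref{Th geometry} that the corollary really needs.

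Next I would identify the ``second intersection point of the curve with the tangent at $F_i$'' in group-law terms. By the definition of the Manin involution, this second intersection point is exactly $I_{\bar\cE_\lambda,F_i}(F_i) = -(F_i+F_i) = -2F_i$. So, denoting by $T_i$ the second intersection of $\bar\cE_\lambda$ with the tangent line at $F_i$, we have $T_i = -2F_i$ for $i=1,2,3$. The claim that $T_1$ lies on the line $(B_1B_4)$ is then the single group-law identity $T_1 + B_1 + B_4 = O$, i.e. $-2F_1 + B_1 + B_4 = O$, which is immediate from $B_1+B_4=2F_1$ obtained in the previous step. The same computation with indices $2$ and $3$ handles the lines $(B_2B_5)$ and $(B_3B_6)$.

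The only genuine subtlety — and the step I would treat most carefully rather than as routine — is the passage from the additive identities back to the affine-geometric statement, i.e. checking that ``$T_i+B_j+B_k=O$'' really does say ``$T_i$, $B_j$, $B_k$ are collinear'' here, and that none of the degenerate configurations (two of the three points coinciding, or the line being tangent) spoil the reading. Since we are in the generic case with $B_1,\dots,B_6$ finite and pairwise distinct, and since $F_1,F_2,F_3$ are distinct points on the line at infinity while $T_i$ is a finite point (it is the second intersection of the curve with a line through an infinite point, hence finite unless that line is the line at infinity, which is excluded because $\{z=0\}$ is not a component of a nonsingular $\bar\cE_\lambda$), the three points $T_i, B_j, B_k$ are genuinely three distinct points of the curve, so $T_i+B_j+B_k=O$ is equivalent to their collinearity by the standard characterization of the group law on a plane cubic. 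I would also remark that the statement is independent of which member $\bar\cE_\lambda$ of the pencil one picks, because the relations $B_i+B_j=2F_k$ among base points hold on every nonsingular member simultaneously (the base points and the points at infinity lie on all curves of the pencil). This gives the corollary for each cubic curve of the pencil as claimed.
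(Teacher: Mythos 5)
Your proof is correct and follows essentially the same route as the paper: from the chain of group-law identities in the proof of Theorem \ref{Th geometry} one extracts $B_1+B_4=2F_1$ (and its analogues), identifies the second intersection of the tangent at $F_i$ with $-2F_i$, and reads off collinearity from $-2F_i+B_j+B_k=O$. The extra care you take with degenerate configurations is a harmless (and reasonable) addition beyond what the paper writes.
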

\begin{proof}
This follows from the equations like 
$$
B_2-F_2=F_2-B_5\quad \Leftrightarrow\quad  B_2+B_5+(-2F_2)=O,
$$
which mean that the points $B_2$, $B_5$ and $-2F_2$ are collinear. Recall that $-2F_2$ is the second intersection point of the cubic curve with its tangent at $F_2$.  
See Figure \ref{Fig3}.
\end{proof}

\begin{figure}
\begin{center}
\includegraphics[width=0.5\textwidth]{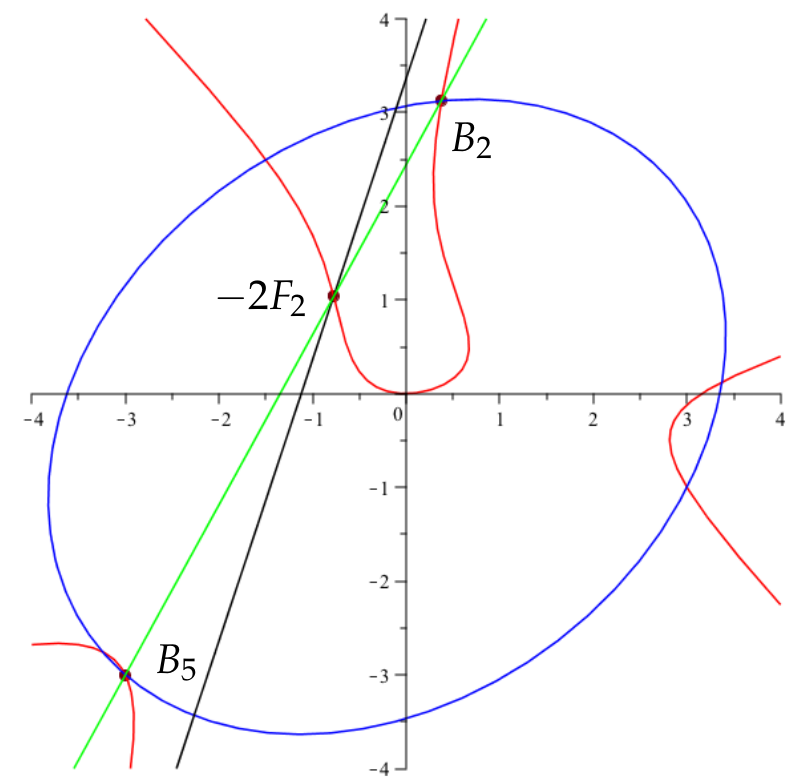}
\caption{The (green) line $(B_2B_5)$ passes through the point $-2F_2$ which is the second intersection point of the (black) tangent line to the curve $\bar C=0$ at the point $F_2$ at infinity}
\label{Fig3}
\end{center}
\end{figure}

A very remarkable statement is an inverse theorem to Theorem \ref{Th geometry}. 
\begin{theorem}\label{Th characterization}
Let a hexagon $B_1B_2B_3B_4B_5B_6$ have three pairs of parallel side lines $(B_1B_2)\parallel (B_4B_5)$, $(B_2B_3)\parallel (B_5B_6)$, and $(B_3B_4)\parallel (B_6B_1)$, passing through the points $F_3$, $F_1$, and $F_2$ at infinity, respectively. Consider the pencil $\frak E$ of cubic curves with the base points $\{B_1,\ldots,B_6,F_1,F_2,F_3\}$. Then the birational map $\Phi$ defined by the pencil $\frak E$ according to equations \eqref{eq main 12}--\eqref{eq main 56} is the Kahan discretization of a planar Hamiltonian quadratic vector field.
\end{theorem}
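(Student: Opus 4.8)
The plan is to invert the derivation of Section \ref{sect: 6-fold}: starting from the geometric data (the hexagon with three pairs of parallel sides meeting at $F_1,F_2,F_3$), I want to reconstruct a Hamilton function $H$ such that the Kahan map $\Phi_f$ with $f=J\nabla H$ coincides with the map defined by the pencil $\frak E$. First I would set up coordinates: place the line at infinity and the three points $F_1,F_2,F_3$ on it, so that $\{z=0\}$ and the three asymptotic directions are fixed. The conic through $B_1,\ldots,B_6$ — which exists precisely because of the parallelism hypothesis, by the converse of the appropriate Pascal-type / Chasles argument (six points with the stated collinearities lie on a conic) — will play the role of $\bar D=0$, and a cubic through all nine base points with prescribed behaviour at infinity will play the role of $\bar C=0$. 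Concretely, I would write the pencil as $\bar C-\lambda z\bar D=0$ and try to read off candidate polynomials $R,S,D$ and hence a candidate vector field.

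The key steps, in order: (1) Show that the parallelism/incidence hypotheses force the six points $B_i$ to lie on a (unique, by genericity) conic $Q=\{\bar D=0\}$, and that the cubic $\bar C=0$ through the nine base points has the line at infinity among its components removed — i.e.\ the pencil is genuinely of the Kahan type $\{\bar C-\lambda z\bar D=0\}$ with $\deg_z$-structure as in Section \ref{sect: pencils and manin}. (2) Use the Manin-transformation description: by the computation in the proof of Theorem \ref{Th geometry} translated into the addition law, the map $\Phi$ defined by \eqref{eq main 12}--\eqref{eq main 56} is, on each curve $\bar\cE_\lambda$, translation by $F_1-B_1 = B_4-F_1 = \cdots$, and the six representations are mutually consistent precisely because $F_1+F_2+F_3=O$ and the hexagon condition holds; this consistency is exactly what makes $\Phi$ well defined as a single birational map of $\bbC\bbP^2$ and not merely a collection of curve automorphisms. (3) Identify $\Phi$ with a Kahan map. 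Here I would use the structural fingerprint of Kahan discretizations established earlier: $\Phi$ has the form $\wx = R/D$, $\wy=S/D$ with $R,S,D$ quadratic, it is reversible in the sense $\Phi^{-1}(\cdot,\ep)=\Phi(\cdot,-\ep)$, it fixes the line at infinity pointwise-on-$F_i$, and it admits the CMOQ integral $\tilde H = C/D$. The cleanest route is a dimension/parameter count: the space of cubic Hamilton functions modulo the affine group acting covariantly on Kahan maps, together with the scaling in $\ep$, has the same dimension as the space of configurations $\{B_1,\ldots,B_6,F_1,F_2,F_3\}$ satisfying the hexagon constraints modulo the same affine group; since Theorem \ref{Th geometry} gives an injection from the former into the latter and both sides are irreducible varieties of equal dimension, the image is dense, and then a closedness/continuity argument (the Kahan locus is closed, being cut out by the explicit polynomial identities of the Appendix) upgrades density to surjectivity.

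The main obstacle I expect is step (3): passing from ``$\Phi$ is a Manin transformation of a Kahan-type pencil'' to ``$\Phi$ is literally $\Phi_f$ for some quadratic Hamiltonian $f$.'' A Manin transformation is only determined up to the choice of the two base points it composes; knowing that all six pairings $(F_i,B_j)$ give the \emph{same} $\Phi$ is strong, but one still has to produce the vector field. I see two ways to handle this and would pursue whichever is less painful. The first is the explicit inversion: from $R,S,D$ reconstructed from the pencil, solve the linear-in-$(\wx,\wy)$ Kahan equations \eqref{dk1}--\eqref{dk2} backwards for the coefficients $a_1,\ldots,a_9$ and verify consistency — this is a finite computation but messy, and its success hinges on the parallelism hypothesis feeding in exactly the right vanishing to make the overdetermined system solvable. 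The second, which I favour for the write-up, is the parameter-count argument above, which avoids the explicit solve by invoking irreducibility of the relevant moduli spaces and the openness of the Kahan image proved implicitly in Section \ref{sect: 6-fold}; the delicate point there is to verify that the two configuration spaces really have equal dimension and that the Kahan locus is closed, so that no spurious non-Kahan component of the ``hexagonal pencil'' space can arise. I would also record the degenerate cases (coincident or infinite $B_i$, singular members of the pencil) as explicitly excluded by the genericity assumption, so that the elliptic addition law used throughout is available on at least one member $\bar\cE_\lambda$.
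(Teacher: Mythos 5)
Your overall architecture matches the paper's: Pascal's theorem puts $B_1,\ldots,B_6$ on a conic, the reducible member (conic plus line at infinity) makes the pencil of Kahan type, the mutual consistency of the six representations \eqref{eq main 12}--\eqref{eq main 56} follows by reversing the addition-law computation of Theorem \ref{Th geometry}, and everything reduces to showing that a single representation, say $I_{\frak E,B_1}\circ I_{\frak E,F_1}$, is a Kahan map. Your first option for this last step is exactly what the paper does: parametrize the hexagon by three slopes $\mu_i$ and six heights $b_i$, compute $\Phi=I_{\frak E,B_1}\circ I_{\frak E,F_1}$ explicitly via Lemmas \ref{lemma involution inf} and \ref{lemma involution finite} (it comes out as $\widetilde{x}=R/D$, $\widetilde{y}=S/D$ with quadratic $R,S,D$), and then solve the Kahan equations of motion, which are linear in $a_1,\ldots,a_9$, backwards. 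The paper reports that this overdetermined linear system has a unique solution and records it, together with the resulting Hamilton function, in Appendix \ref{Appendix B}; so the ``messy but finite computation'' you describe is precisely the published proof.

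The route you say you favour --- dimension count, density, closedness --- has a genuine gap. The Kahan locus is the image of the parameter space $(a_1,\ldots,a_9,\ep)$ under a rational parametrization; by Chevalley's theorem it is only constructible, and your justification that it is ``cut out by the explicit polynomial identities of the Appendix'' does not hold: the Appendix gives the forward formulas $a_i\mapsto(R,S,D)$, not equations defining the image. Equivalently, solvability of the parametrized linear system for $a_1,\ldots,a_9$ is the rank-equality condition $\mathrm{rank}\,M=\mathrm{rank}(M\,|\,w)$, which is constructible but not closed, so density would not upgrade to surjectivity; at best you would obtain the statement for a Zariski-generic hexagon. Moreover, the injectivity of the assignment from Hamilton functions to hexagon configurations, which you need for the fibers to be finite so that equal dimensions yield density, is asserted rather than proved (a priori two Hamiltonians could generate the same pencil; ruling this out seems to require the very computation you are trying to avoid). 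Neither defect touches the theorem itself --- the explicit solve repairs both at once by exhibiting the unique $a_1,\ldots,a_9$ for every admissible configuration --- but it means your preferred write-up would not stand on its own, and you should fall back on your first option.
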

\begin{proof}
Observe that, according to the Pascal's theorem, the points $B_1,\ldots,B_6$ lie on a conic $D$. The pencil $\frak E$ contains a reducible curve consisting of the conic $D$ and the line at infinity. The fact that all six Manin maps in \eqref{eq main 12}--\eqref{eq main 56} coincide is an easy consequence of the condition of the theorem, if one reverses the arguments from the proof of Theorem \ref{Th geometry}. It remains to prove the statement for one of the representations, say for $I_{\frak E, B_1}\circ I_{\frak E,F_1}$. This is again done with a direct Maple computation, which can be organized as follows. 
\smallskip

1) Prescribe arbitrary nine coefficients  of the side lines of the hexagon (three slopes $\mu_1$, $\mu_2$, $\mu_3$ and six heights $b_1,\ldots,b_6$), so that equations of these lines read:
$$
(B_1B_2): \; y=\mu_3x+b_1, \quad (B_2B_3): \; y=\mu_1x+b_2, \quad (B_3B_4): \; y=\mu_2x+b_3, 
$$
$$
(B_4B_5): \; y=\mu_3x+b_4, \quad (B_5B_6): \; y=\mu_1x+b_5, \quad (B_6B_1): \; y=\mu_2x+b_6. 
$$

2) Compute the points $B_1,\ldots,B_6$ as pairwise intersection points of these lines:
$$
B_1= \left(\frac{b_1-b_6}{\mu_2-\mu_3},\frac{\mu_2b_1-\mu_3b_6}{\mu_2-\mu_3}\right), \quad {\rm etc.}
$$

3) Compute the coefficients $d_1,\ldots,d_6$ of the conic passing through these six points:
$$
D(x,y)=d_1x^2+d_2xy+d_3y^2+d_4x+d_5y+d_6=0.
$$

4) Compute the coefficients $c_5,\ldots,c_{10}$ of the pencil of cubic curves $C(x,y)-\lambda D(x,y)=0$ passing through $B_1,\ldots, B_6$.  Here
$$
 C(x,y)=(y-\mu_1x)(y-\mu_2x)(y-\mu_3x)+c_5x^2+c_6xy+c_7y^2+c_8x+c_9y+c_{10}=0
$$ 
is the equation of an arbitrarily chosen nonsingular curve of the pencil. For instance, setting $c_{10}=0$ defines $c_5,\ldots,c_9$ uniquely.

5) Compute the involutions $I_{\bar\cE,F_1}(x,y)$ and $I_{\bar \cE,B_1}(x,y)$ according to formulas from Lemmas
\ref{lemma involution inf} and \ref{lemma involution finite} for the curve $\cE: C-\lambda D=0$ of the pencil passing through a running point $(x,y)$. Thus, we use the expressions
$$
u_1=-\mu_1\mu_2\mu_3,\quad u_2=\mu_1\mu_2+\mu_2\mu_3+\mu_3\mu_1, \quad u_3=-(\mu_1+\mu_2+\mu_3), \quad u_4=1,
$$
and
$$
u_i=c_i-\lambda d_{i-4}, \quad i=5,\ldots,10, \quad {\rm where}\quad \lambda=\frac{C(x,y)}{D(x,y)}.
$$

6) Compute the map $(\t x, \t y)=\Phi(x,y)=I_{\bar \cE,B_1}\circ I_{\bar\cE,F_1}(x,y)$. It turns out to be of the form
$$
\t x=\frac{R(x,y)}{D(x,y)}, \quad \t y=\frac{S(x,y)}{D(x,y)},
$$
where $R$ and $S$ are polynomials of degree 2.
\smallskip

7) Check whether these rational functions $\t x$, $\t y$ satisfy equations of motion of the Kahan discretization with $\ep=1$:
\bea
\t x-x  & = & a_2 \t x x+a_3(\t xy+x \t y)+a_4\t yy+a_6(\t x+x)+a_7(\t y+y)+a_9, \nonumber \\
\t y-y & = & -a_1\t xx-a_2(\t xy+x\t y)-a_3\t yy-a_5(\t x+x)-a_6(\t y +y)-a_8, \nonumber
\eea
with some $a_1,\ldots, a_9$. This is a system of linear equations for the coefficients $a_1,\ldots, a_9$, which turns out to have a unique solution.
This solution can be found in Appendix \ref{Appendix B}. It leads to the following compact formula for the Hamilton function $H(x,y)$ in terms of the data $\mu_i$, $b_i$ of the pencil of cubic curves (which stay invariant under the map $\Phi_f(x,y;\ep)$ with $f={J\nabla H}$ and $\ep=1$):
\bea
H(x,y) & = & \frac{2\mu_{12}}{b_{14}\mu_{23}\mu_{13}}\Big(\tfrac{1}{3}(\mu_3x-y)^3+\tfrac{1}{2}(b_1+b_4)(\mu_3x-y)^2+b_1b_4(\mu_3x-y)\Big) \nn \\
  &  & -\frac{2\mu_{23}}{b_{25}\mu_{12}\mu_{13}}\Big(\tfrac{1}{3}(\mu_1x-y)^3+\tfrac{1}{2}(b_2+b_5)(\mu_1x-y)^2+b_2b_5(\mu_1x-y)\Big) \nn \\
  & & +\frac{2\mu_{13}}{b_{36}\mu_{12}\mu_{23}}\Big(\tfrac{1}{3}(\mu_2x-y)^3+\tfrac{1}{2}(b_3+b_6)(\mu_2x-y)^2+b_3b_6(\mu_2x-y)\Big), \nn
\eea
where $b_{ij}=b_i-b_j$, $\mu_{ij}=\mu_i - \mu_j$.
\end{proof}

\section{Conclusions}

We proved an amazing characterization of integrable maps arising as Kahan discretizations of quadratic planar Hamiltonian vector fields, in terms of the geometry of their set of invariant curves. Such a neat characterization supports our belief expressed in \cite{PPS1, PPS2} that Kahan-Hirota-Kimura discretizations will serve as a rich source of novel results concerning algebraic geometry of integrable birational maps. It will be desirable to find similar characterizations for further classes of integrable Kahan discretizations, in dimensions $n=2$ (which belongs to our agenda in the near future) and for $n>2$ (even more important but also more difficult) .

\begin{appendix}
\section{Formulas for the planar Kahan map and its integral}\label{Appendix}

The numerators of the components of map \eqref{eq: Kahan map} are:
\bea 
R(x,y,\ep) & = & r_1 x^2 + r_2 xy + r_3 y^2 + r_4 x + r_5 y + r_6, \label{eq:R}\\
S(x,y,\ep) & = & s_1 x^2 + s_2 xy + s_3 y^2 + s_4 x + s_5y + s_6, \label{eq:S}
\eea
with
\bea 
&& r_1 = \ep a_2 + \ep^2(a_2 a_6 - a_3 a_5), \nn\\
&& r_2 = 2\ep a_3 + \ep^2(a_2 a_7 - a_4 a_5), \nn \\
&& r_3 = \ep a_4 + \ep^2(a_3 a_7 - a_4 a_6 ), \nn \\
&& r_4 = 1 + 2\ep a_6 + \ep^2(a_6^2 - a_5 a_7 - a_3 a_8 + a_2 a_9), \nn \\
&& r_5 = 2\ep a_7 + \ep^2(a_3 a_9 - a_4 a_8), \nn\\
&& r_6 = \ep a_9 + \ep^2(a_6 a_9 - a_7 a_8),\nn 
\eea 
and
\bea 
&& s_1 = -\ep a_1 +\ep^2(a_2 a_5- a_1 a_6), \nn\\ 
&& s_2 = -2\ep a_2 + \ep^2(a_3 a_5 - a_1 a_7), \nn \\
&& s_3 = -\ep a_3 + \ep^2(a_3 a_6 - a_2 a_7). \nn \\
&& s_4 = -2\ep a_5 + \ep^2(a_2 a_8 - a_1 a_9), \nn\\
&& s_5 = 1 - 2\ep a_6 + \ep^2(a_6^2 - a_5 a_7 + a_3 a_8 - a_2 a_9), \nn \\
&& s_6 = -\ep a_8 + \ep^2(a_6 a_8 - a_5 a_9) .\nn 
\eea
The common denominator is:
\beq \label{H denom}
D(x,y,\ep)=  d_1 x^2 +d_2 xy + d_3 y^2 +d_4 x+ d_5y + d_6, 
\eeq
where
\bea
&& d_1 = \ep^2(a_1 a_3 -a_2^2 ), \nn \\
&& d_2 = \ep^2(a_1 a_4-a_2 a_3),\nn \\
&& d_3 = \ep^2(a_2 a_4- a_3^2),\nn \\
&& d_4 = \ep^2(a_1 a_7 - 2a_2 a_6+ a_3 a_5),\nn \\
&& d_5 = \ep^2(a_4 a_5 - 2a_3 a_6+ a_2 a_7),\nn \\
&& d_6 = 1+\ep^2(a_5 a_7-a_6^2 ). \nn
\eea
Similarly,
\beq  \label{H num}
C(x,y,\ep)=  c_1 x^3 +c_2 x^2y +c_3 x y^2 +c_4 y^3
+c_5 x^2 +c_6 xy +c_7 y^2 +c_8 x+ c_9y + c_{10},
\eeq
with the coefficients
\bea
&& \t c_1=  \tfrac{1}{3}a_1+\tfrac{1}{3}\ep^2(a_1 a_3 a_8 -a_1 a_6^2 +2 a_2 a_5 a_6-a_2^2a_8 -a_3a_5^2), \nn \\
&& \t c_2 = a_2+\tfrac{1}{3}\ep^2(a_1 a_3 a_9 + a_1 a_4 a_8 - 2 a_1 a_6 a_7 + 2 a_2 a_5 a_7 - a_4 a_5^2 - a_2^2 a_9 - a_2 a_3 a_8 + a_2 a_6^2) , \nn \\
&& \t c_3 = a_3+\tfrac{1}{3}\ep^2(a_2 a_4 a_8 + a_1 a_4 a_9 - 2 a_4 a_5 a_6 + 2 a_3 a_5 a_7 - a_1 a_7^2 - a_3^2 a_8 - a_2 a_3 a_9 + a_3 a_6^2) , \nn \\
&& \t c_4 = \tfrac{1}{3}a_4+\tfrac{1}{3}\ep^2(a_2 a_4 a_9  - a_4 a_6^2 + 2 a_3 a_6 a_7 - a_3^2a_9 -  a_2 a_7^2) , \nn \\
&& \t c_5 = a_5+\tfrac{1}{3}\ep^2(a_1 a_7 a_8 - 2  a_1 a_6 a_9 + 2 a_2 a_5 a_9 - a_3 a_5 a_8 - a_5^2 a_7 + a_5 a_6^2) , \nn \\
&& \t c_6 = 2a_6+\tfrac{1}{3}\ep^2(-a_1 a_7 a_9 -a_4 a_5 a_8 - 2 a_2 a_6 a_9 - 2 a_3 a_6 a_8 + 3 a_2 a_7 a_8 +3 a_3 a_5 a_9 -  a_5 a_6 a_7 + 2a_6^3), \nn \\
&&\t c_7 =  a_7+\tfrac{1}{3}\ep^2(a_4 a_5 a_9 - 2 a_4 a_6 a_8 + 2 a_3 a_7 a_8 -  a_2 a_7 a_9 - a_5 a_7^2 + a_6^2 a_7), \nn \\
&& \t c_8 = a_8+\tfrac{1}{3}\ep^2(2 a_2 a_8 a_9 - a_1 a_9^2 - a_3 a_8^2 - a_5 a_7 a_8 + a_6^2 a_8),  \nn \\
&& \t c_9 = a_9+\tfrac{1}{3}\ep^2(2 a_3 a_8 a_9 - a_4 a_8^2 - a_2 a_9^2 - a_5 a_7 a_9 + a_6^2 a_9), \nn \\
&& \t c_{10} = \tfrac{1}{3}\ep^2(2 a_6 a_8 a_9 - a_5 a_9^2  - a_7 a_8^2).\nn 
\eea

\section{Coefficients of the Hamiltonian from the data of the pencil}\label{Appendix B}

Here are the formulas referred to at the end of the proof of Theorem \ref{Th characterization}:
\bea
a_1&=& \frac{
b_{25} b_{36}\mu_{12}^2 \mu_3^3
-b_{14}b_{36}\mu_{23}^2 \mu_1^3
+ b_{14} b_{25}\mu_{13}^2 \mu_2^3
}{D}, \nn \\
a_2&=& \frac{
-  b_{25} b_{36}\mu_{12}^2 \mu_3^2
+b_{14}b_{36}\mu_{23}^2 \mu_1^2
- b_{14} b_{25}\mu_{13}^2 \mu_2^2
}{D}, \nn \\
a_3&=& \frac{
b_{25} b_{36}\mu_{12}^2 \mu_3
-b_{14}b_{36}\mu_{23}^2 \mu_1
+ b_{14} b_{25}\mu_{13}^2 \mu_2
}{D}, \nn \\
a_4&=& \frac{
 - b_{25} b_{36}\mu_{12}^2
+b_{14}b_{36}\mu_{23}^2 
- b_{14} b_{25}\mu_{13}^2 
}{D}, \nn \\
a_5&=& \frac{
(b_1+b_4) b_{25} b_{36} \mu_3^2 \mu_{12}^2
- (b_2+b_5) b_{14} b_{36} \mu_1^2 \mu_{23}^2
+ (b_3+b_6) b_{14} b_{25} \mu_2^2 \mu_{13}^2
}{2D}, \nn \\
a_6&=& \frac{
- (b_1+b_4) b_{25} b_{36} \mu_3 \mu_{12}^2
+ (b_2+b_5) b_{14} b_{36} \mu_1 \mu_{23}^2
- (b_3+b_6) b_{14} b_{25} \mu_2 \mu_{13}^2
}{2D}, \nn \\
a_7&=& \frac{
(b_1+b_4) b_{25} b_{36}  \mu_{12}^2
- (b_2+b_5) b_{14} b_{36}  \mu_{23}^2
+(b_3+b_6) b_{14} b_{25}  \mu_{13}^2 }{2D}, \nn \\
a_8&=& \frac{ 
b_1 b_4 b_{25} b_{36} \mu_3 \mu_{12}^2
- b_2 b_5 b_{14} b_{36} \mu_1 \mu_{23}^2
+ b_3 b_6 b_{14} b_{25}  \mu_2 \mu_{13}^2
}{D}, \nn \\
a_9&=& \frac{
- b_1 b_4 b_{25} b_{36}  \mu_{12}^2
+ b_2 b_5 b_{14} b_{36}  \mu_{23}^2
- b_3 b_6 b_{14} b_{25}  \mu_{13}^2
}{D}, \nn 
\eea
where
$$
D=\frac12 b_{14}b_{25}b_{36}\mu_{12}\mu_{13}\mu_{23}
$$
and $b_{ij}=b_i-b_j$, $\mu_{ij}=\mu_i - \mu_j$. 

\end{appendix}

\section*{Acknowledgment}
This research is supported by the DFG Collaborative Research Center TRR 109 ``Discretization in Geometry and Dynamics''.


\end{document}